\documentclass[10pt,twocolumn]{IEEEtran}

\usepackage{cite}
\usepackage{amsmath,amssymb,amsfonts}
\usepackage{algorithmic}
\usepackage{graphicx}
\usepackage{textcomp}
\usepackage{xcolor}
\usepackage{booktabs}
\usepackage{multirow}
\usepackage{hyperref}
\usepackage{float}
\usepackage{caption}
\usepackage{amsthm}
\usepackage{epstopdf}

\newtheorem{theorem}{Theorem}[section]

\newtheorem{proposition}[theorem]{Proposition}
\newtheorem{corollary}[theorem]{Corollary}

\hypersetup{colorlinks=true,linkcolor=blue,citecolor=blue,urlcolor=blue}

\def\BibTeX{{\rm B\kern-.05em{\sc i\kern-.025em b}\kern-.08em
    T\kern-.1667em\lower.7ex\hbox{E}\kern-.125emX}}

\begin{document}

\title{Resolution Information: Limits of Ambiguity Resolution for Generative Communication}

\author{
\IEEEauthorblockN{
Angeles Vazquez-Castro\IEEEauthorrefmark{1},
Faheem Dustin Quazi\IEEEauthorrefmark{2},
Zhu Han\IEEEauthorrefmark{2}
}\\
\IEEEauthorblockA{\IEEEauthorrefmark{1}
Universitat Aut\`onoma de Barcelona (IEEC-UAB), Barcelona, Spain\\
Email: angeles.vazquez@uab.cat}\\
\IEEEauthorblockA{\IEEEauthorrefmark{2}
Electrical and Computer Engineering Department\\
University of Houston, Houston, TX, USA\\
Emails: fdquazi@CougarNet.UH.EDU, zhan2@central.uh.edu}
}

\maketitle

\begin{abstract}
In generative communication, the transmitter sends a compact generative
description, such as model parameters or a latent representation, rather than raw data. The receiver uses this description to form a
posterior belief over the underlying state and to resolve semantic ambiguity:
which interpretation, decision, or action is supported by the received
representation?
Inspired by Shannon's geometric view of communication as uncertainty
resolution, we introduce resolution information as the minimum information
update, measured in nats, required to move the receiver's posterior belief into
a low-ambiguity semantic region.  Our work yields three main results. First, when the receiver can form any posterior belief (the ideal unconstrained case), resolution information reduces to a binary divergence that depends only on each region's prior probability. In this case, the shape of the regions is irrelevant and under repeated sampling, ambiguity decays exponentially with an exponent equal to the resolution information, giving it an operational meaning as an \emph{ambiguity exponent}. Second, and most surprisingly, when the generative representation constrains
the posterior family, as in practice, geometry becomes operational and can
create irreducible ambiguity floors: half-spaces remain resolvable, whereas
polytope-type regions can exhibit residual ambiguity that no amount of
additional information can remove. These results reveal a fundamental departure from classical channel coding.
In Shannon theory, codes can be designed so that decoding regions separate
messages and error probability vanishes below capacity. In generative
communication, the model itself induces a constrained posterior geometry that
may prevent asymptotic ambiguity resolution. The resulting limit is not on rate,
but on resolvability itself.
\end{abstract}

\section{Introduction}

Classical information theory is built on the idea that information measures
the reduction of uncertainty. Entropy quantifies uncertainty, and mutual
information quantifies how much uncertainty is reduced by an observation
\cite{CoverThomas2006}. In Shannon's theory, this reduction of uncertainty
also has a geometric interpretation: messages, signals, and decoding regions
are represented as objects in suitable spaces, and reliable communication
corresponds to locating the received signal in the correct decoding
region~\cite{Shannon1949}.

Our observation is that generative communication can be cast on such theory by identifying the object of uncertainty. The receiver
is no longer necessarily trying to reconstruct a source sequence or decode a
discrete message. Instead, it receives a generative representation, for
example, a latent description, model parameter, learned prior, or compressed
semantic code, and uses it to form a posterior belief about an underlying
state of the world. The relevant question becomes whether the receiver's posterior belief is
sufficiently concentrated on the semantic region corresponding to the intended
interpretation, decision, or action.

This shift exposes a gap in the usual information-theoretic vocabulary.
Mutual information measures uncertainty reduction about random variables, and
rate--distortion theory measures the rate needed to reproduce a source within
a prescribed distortion. These concepts remain fundamental, but they do not
directly quantify the information needed to make a semantic interpretation
unambiguous under a posterior belief. In generative communication, the central
object is not necessarily a reconstructed signal, but a posterior distribution
over possible meanings.

Recent work on semantic and generative communication has developed several
approaches to quantifying and transmitting meaning. These include
task-oriented and semantic rate--distortion formulations
\cite{Liu2021,Qin2022,Guo2022}, posterior design and inference-aware
communication \cite{Akyol2026}, optimal-transport formulations for generative
communication \cite{Qu2024}, deep joint source--channel coding
\cite{Jin2025,Li2024,Nguyen2025}, and foundation-model-based generative
semantic communication \cite{Xu2024,Xu2025}. These approaches provide
important operational and algorithmic perspectives, typically optimizing
reconstruction quality, perceptual similarity, distribution matching, or
task-dependent empirical metrics.

In contrast, we take a different approach. We retain Shannon's fundamental principle: information measures the reduction of uncertainty, but apply it to a new object: ambiguity over \emph{semantic regions}. Formally, a semantic region is a measurable subset of the state space such as the collection of such regions forms a partition. Each region corresponds to states that share the same interpretation, but the geometry of these regions is entirely general and can be arbitrary. A generative representation is informative if it moves the receiver's belief toward a posterior that assigns high probability to one such region. Our central question is: how much information, measured in nats, is needed to reduce semantic ambiguity below a target level?

To answer this question, we introduce \emph{resolution information}. Informally,
if $\Gamma(p)$ denotes the ambiguity of a posterior belief $p$, and $p_0$ is
the receiver's initial belief, then the resolution information at ambiguity
level $\epsilon$ is
\begin{equation}
I_{\mathrm{res}}^*(\epsilon)
=
\inf_{p:\Gamma(p)\le \epsilon}
D_{\mathrm{KL}}(p\|p_0).
\label{eq:intro_resolution_information}
\end{equation}
Thus, $I_{\mathrm{res}}^*(\epsilon)$ is the minimum relative-entropy update
required to make the posterior sufficiently unambiguous. The full formal
definition of ambiguity and semantic regions is given in Section~\ref{sec:framework}.

Our formulation is grounded in the literature of information projections and divergence
geometry \cite{Csiszar1975,Csiszar2003,vanErven2014}, large deviations theory
\cite{DemboZeitouni1998}, and information geometry \cite{Amari2016}. It is
also inspired by the notion of channel resolvability introduced by Han and
Verd\'u \cite{HanVerdu1993} and later extended to quantum settings by Hayashi
\cite{Hayashi2006,Hayashi2011}. However, the object being resolved is different.
Classical channel resolvability concerns approximating output statistics using
coding or randomness resources. Here, generative resolvability concerns whether
a generative representation can drive posterior ambiguity over geometrical regions
to an arbitrarily small level.

Our results show that the distinction is significant. In classical channel coding, reliability is
limited by rate: below capacity, error probability can be driven to zero.
In generative communication, ambiguity may persist even when more information
is supplied, because the generative representation may restrict the posterior
family. In such cases, the limitation is not only on transmission rate but on
resolvability itself. The geometry of the semantic regions and the expressive
or precision limits of the posterior family can create irreducible ambiguity
floors. Our main contributions can be summarized as follows.

\begin{itemize}
    \item \textbf{Resolution information as ambiguity reduction.}
    We introduce resolution information as the minimum belief update needed to
    reduce posterior ambiguity below a target level. This gives a Shannon-like
    uncertainty-reduction principle for generative communication, where the
    object to be resolved is not a decoded message but a semantic region in the
    receiver's posterior belief space. The formal definition and basic
    properties are given in Section~\ref{sec:framework}.

    \item \textbf{Information-projection limit and ambiguity exponent.}
    In the ideal unconstrained geometrical space setting, we show that ambiguity resolution is an
    information-projection problem. The resulting limit depends only on the
    prior mass of the target semantic region, not on its geometric shape. We
    further show that resolution information governs the exponential decay of
    ambiguity under repeated sampling, giving it an operational meaning as an
    ambiguity exponent. These results are established in
    Section~\ref{sec:projection}.

    \item \textbf{Generative resolvability.}
    We define generative resolvability as the ability of a generative
    representation to drive posterior ambiguity to zero asymptotically. This
    notion parallels Shannon capacity, but captures a different limitation:
    ambiguity may persist not because the rate is too high, but because the
    representation cannot generate sufficiently resolving posteriors. This is
    developed in Section~\ref{sec:projection}.

    \item \textbf{Geometry-induced ambiguity floors.}
    We show that once the posterior family is constrained, geometry becomes
    operational. For Gaussian posterior families, half-spaces are resolvable,
    whereas polytope-type regions can exhibit irreducible ambiguity floors
    because they require simultaneous concentration along multiple directions.
    This geometric mechanism is analyzed in Section~\ref{sec:gaussian}.
\end{itemize}

Overall, our approach and results establish an information-theoretic framework for
ambiguity resolution in generative communication. They show that semantic
uncertainty can behave like classical uncertainty in the unrestricted regime,
where ambiguity decays exponentially, but can depart fundamentally from the
Shannon picture when posterior constraints create irreducible ambiguity.

The paper is organized as follows. Section~\ref{sec:framework} introduces the
generative communication framework, semantic ambiguity, and resolution
information. Section~\ref{sec:projection} derives the information-projection
limits, establishes the ambiguity-concentration exponent, and introduces
generative resolvability. Section~\ref{sec:gaussian} studies geometric limits
for Gaussian posterior families, showing that half-spaces are resolvable
whereas polytope-type regions can exhibit irreducible ambiguity floors.
Section~\ref{sec:conclusion} concludes the paper.


\section{Generative Communication Framework}
\label{sec:framework}

In this section we formalize generative communication as a belief-update problem as follows. The transmitter does not send raw observations. Instead, it sends a \emph{generative representation} $\theta$ (e.g. model parameters, latent description or a learned prior) that summarizes the information acquired from local observations. The receiver uses this representation to form a posterior belief over an underlying state space. The goal is not to reconstruct the original data, but to resolve \emph{ambiguity}: the receiver should concentrate enough posterior probability in the appropriate semantic region to support reliable interpretation.

Throughout thsi work, we assume that all probability measures are defined on a measurable space $(\mathcal{S}, \mathcal{B})$ and are absolutely continuous with respect to a common reference measure. We write $p$ for the corresponding density and $p(A) = \int_A p(s)\,ds$ for the probability assigned to a measurable set $A \subseteq \mathcal{S}$.

\subsection{Semantic Decision Regions and Ambiguity}

In analogy with decoding or decision regions in classical information theory,
we model semantic interpretations as measurable regions of the state space.
Each region contains the states that support the same interpretation, decision,
or action. Let $\mathcal{A}$ be a finite set of possible interpretations or actions. Each action $a \in \mathcal{A}$ is associated with a measurable \emph{semantic region}
\begin{equation}
A_a = \{ s \in \mathcal{S} : a \text{ is the appropriate interpretation for state } s \}.
\end{equation}
The collection $\{A_a\}_{a \in \mathcal{A}}$ is assumed to form a partition of $\mathcal{S}$ up to sets of measure zero. Thus, each state belongs to exactly one semantic region, except possibly on boundaries.

The receiver starts from an initial belief $p_0$ over $\mathcal{S}$. After receiving the generative representation $\theta$, the receiver forms a posterior belief
\begin{equation}
p(s) = p(s \mid \theta).
\end{equation}
For a semantic decision region $A_a$, the posterior probability assigned to that region is $p(A_a) = \int_{A_a} p(s)\,ds$.

The receiver must decide which interpretation is supported by the generative representation. From an information-theoretic perspective, the decision that minimizes residual uncertainty about the state is the one with the largest posterior probability \cite{CoverThomas2006}. We therefore define the \emph{ambiguity} of a posterior belief $p$ as
\begin{equation}
\Gamma(p) = 1 - \max_{a \in \mathcal{A}} p(A_a).
\label{eq:ambiguity}
\end{equation}
This quantity is the posterior probability mass outside the most likely semantic region. If almost all posterior mass lies in one region, then $\Gamma(p)$ is small and the interpretation is essentially resolved. If the posterior mass is spread across several regions, then multiple interpretations remain plausible and the ambiguity is large.

For some derivations, it is useful to focus on the simple case of one prescribed semantic region $A \subseteq \mathcal{S}$. In that case, the \emph{region-specific ambiguity} is
\begin{equation}
\Gamma_A(p) = 1 - p(A).
\end{equation}
Thus, requiring ambiguity at most $\epsilon$ is equivalent to requiring $p(A) \ge 1 - \epsilon$.

\subsection{Resolution Information}

We now quantify the information needed to move from the initial belief $p_0$ to a posterior belief $p$ with smaller ambiguity. Throughout the paper, logarithms are natural logarithms, so all information quantities are measured in nats.

Following classical information theory, the information gained by updating from $p_0$ to $p$ is measured by the Kullback--Leibler divergence \cite{CoverThomas2006}:
\begin{equation}
D_{\text{KL}}(p \| p_0) = \int_{\mathcal{S}} p(s) \log \frac{p(s)}{p_0(s)}\,ds.
\label{eq:kl}
\end{equation}
This quantity can be interpreted as the coding penalty incurred when using a code optimized for $p_0$ to describe a source distributed according to $p$, or equivalently, the information in nats needed to change one's belief from $p_0$ to $p$.

For a target ambiguity level $\epsilon \ge 0$, we define the \emph{resolution information} as the minimum information update required to achieve ambiguity at most $\epsilon$:
\begin{equation}
I_{\text{res}}^*(\epsilon) = \inf_{p : \Gamma(p) \le \epsilon} D_{\text{KL}}(p \| p_0).
\label{eq:resinfo}
\end{equation}
Intuitively, $I_{\text{res}}^*(\epsilon)$ is the smallest amount of information (in nats) that the generative representation must convey so that the receiver's posterior ambiguity is at most $\epsilon$. The constraint $\Gamma(p) \le \epsilon$ defines the set of beliefs that are sufficiently unambiguous. Resolution information is then the minimum relative-entropy distance from the initial belief to this set, or equivalently, the minimum additional information needed to resolve ambiguity to the desired level.

\subsection{Basic Properties}

Resolution information satisfies several immediate properties that formalize the intuition that easier ambiguity targets require less information, and that no information is needed if the initial belief already satisfies the target.

\begin{proposition}
\label{prop:basic}
The function $I_{\text{res}}^*(\epsilon)$ has the following properties:
\begin{enumerate}
\item \textbf{Monotonicity:} $I_{\text{res}}^*(\epsilon)$ is nonincreasing in $\epsilon$.
\item \textbf{Zero at feasibility:} If $\epsilon \ge \Gamma(p_0)$, then $I_{\text{res}}^*(\epsilon) = 0$.
\item \textbf{Positivity under separation:} If $\epsilon < \Gamma(p_0)$ and the feasible set $\mathcal{F}_\epsilon = \{p : \Gamma(p) \le \epsilon\}$ is separated from $p_0$ in total variation, i.e., there exists $\delta > 0$ such that 
  $d_{\text{TV}}(p, p_0) \ge \delta$ for all $p \in \mathcal{F}_\epsilon$, where 
  $d_{\text{TV}}(p, p_0) = \frac{1}{2} \int_{\mathcal{S}} |p(s) - p_0(s)|\,ds$, 
  then $I_{\text{res}}^*(\epsilon) > 0$.
\end{enumerate}
\end{proposition}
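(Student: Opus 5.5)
The three properties follow directly from definition \eqref{eq:resinfo}: the first two are statements about infima over nested feasible sets, and the third needs a lower bound on KL divergence in terms of total variation. I will treat them in order.

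\textbf{Monotonicity.} For $\epsilon_1 \le \epsilon_2$ the feasible sets are nested, $\mathcal{F}_{\epsilon_1} = \{p:\Gamma(p)\le\epsilon_1\} \subseteq \mathcal{F}_{\epsilon_2}$, because any $p$ with $\Gamma(p)\le\epsilon_1$ also satisfies $\Gamma(p)\le\epsilon_2$. An infimum over a larger set can only be smaller or equal, so $I_{\text{res}}^*(\epsilon_2)\le I_{\text{res}}^*(\epsilon_1)$. The convention $\inf\emptyset=+\infty$ is consistent with this: if $\mathcal{F}_{\epsilon_1}$ is empty the inequality holds trivially.

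\textbf{Zero at feasibility.} If $\epsilon\ge\Gamma(p_0)$, then $p_0$ itself lies in $\mathcal{F}_\epsilon$. Hence $I_{\text{res}}^*(\epsilon)\le D_{\text{KL}}(p_0\|p_0)=0$. Nonnegativity of KL divergence (Gibbs' inequality \cite{CoverThomas2006}) gives $I_{\text{res}}^*(\epsilon)\ge 0$, so $I_{\text{res}}^*(\epsilon)=0$.

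\textbf{Positivity under separation.} The key tool is Pinsker's inequality in nats,
\begin{equation*}
D_{\text{KL}}(p\|p_0)\ \ge\ 2\, d_{\text{TV}}(p,p_0)^2 ,
\end{equation*}
with $d_{\text{TV}}$ normalized by the factor $\tfrac12$ as in the statement.
\begin{itemize}
\item For every $p\in\mathcal{F}_\epsilon$, the separation hypothesis gives $d_{\text{TV}}(p,p_0)\ge\delta$, hence $D_{\text{KL}}(p\|p_0)\ge 2\delta^2$.
\item The constant $2\delta^2$ does not depend on $p$, so taking the infimum over $\mathcal{F}_\epsilon$ yields $I_{\text{res}}^*(\epsilon)\ge 2\delta^2>0$.
\item If $\mathcal{F}_\epsilon$ is empty, the infimum is $+\infty$, which is also positive.
\end{itemize}
Densities $p$ that are not absolutely continuous with respect to $p_0$ have $D_{\text{KL}}(p\|p_0)=+\infty$, so they satisfy the bound trivially.

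\textbf{Main obstacle.} The only point needing care is the role of the separation hypothesis. Without it, the infimum need not be attained, and one could have a sequence $p_n$ with $D_{\text{KL}}(p_n\|p_0)\to0$ while $\Gamma(p_n)\le\epsilon<\Gamma(p_0)$. Such a sequence would converge to $p_0$ in total variation, and total variation does not necessarily preserve the constraint in the limit. The hypothesis together with Pinsker's inequality rules this out uniformly, which is why property~3 is stated conditionally rather than as an automatic consequence of $\epsilon<\Gamma(p_0)$. In the absolutely continuous setting here, $p\mapsto p(A_a)$ is $1$-Lipschitz in total variation, so the separation actually holds automatically with $\delta=\Gamma(p_0)-\epsilon$. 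I would mention this as a remark rather than rely on it.
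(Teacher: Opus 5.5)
Your proof is correct and follows the paper's argument: nested feasible sets give monotonicity, feasibility of $p_0$ gives the zero property (you usefully make explicit the nonnegativity of KL, which the paper leaves implicit), and Pinsker's inequality gives the uniform bound $2\delta^2$ for positivity. One correction to your closing commentary: your final remark is right that $\Gamma$ is $1$-Lipschitz in total variation, so separation holds automatically with $\delta=\Gamma(p_0)-\epsilon$, but that same fact makes your earlier ``main obstacle'' claim false (no sequence $p_n\to p_0$ in total variation can keep $\Gamma(p_n)\le\epsilon<\Gamma(p_0)$), so that sentence should be dropped.
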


\begin{proof}
Monotonicity follows because the feasible set $\{p : \Gamma(p) \le \epsilon\}$ expands as $\epsilon$ increases. If $\epsilon \ge \Gamma(p_0)$, then the initial belief $p_0$ itself satisfies the ambiguity constraint, so $p = p_0$ is feasible and $D_{\text{KL}}(p_0 \| p_0) = 0$, yielding $I_{\text{res}}^*(\epsilon) = 0$. For the third property, if every feasible posterior satisfies $d_{\text{TV}}(p, p_0) \ge \delta$, then by Pinsker's inequality \cite{CoverThomas2006},
\begin{equation}
D_{\text{KL}}(p \| p_0) \ge 2 \, d_{\text{TV}}(p, p_0)^2 \ge 2\delta^2 > 0,
\end{equation}
and taking the infimum preserves the positivity.
\end{proof}

These properties are natural. Monotonicity says that allowing a larger ambiguity level cannot increase the required information. The second property says that if the initial belief is already sufficiently unambiguous, no additional information is needed. The third property says that if all low-ambiguity posteriors are separated from the initial belief in total variation, then a positive information update is necessary to achieve the ambiguity target.

Having established the basic framework, we now need to characterize the resolution information. The following section derives the fundamental limits, first for a fixed semantic region and then for a full partition.


\section{Fundamental Limits of Ambiguity Resolution}
\label{sec:projection}

We now derive the fundamental information-theoretic limits of ambiguity resolution. We first consider the case where the receiver can form \emph{any} posterior belief (the ideal, unconstrained setting). This yields a closed-form expression for resolution information. We then show that under repeated sampling, ambiguity decays exponentially with an exponent exactly equal to the resolution information, giving it an operational meaning as an ambiguity exponent. Finally, we introduce \emph{generative resolvability} as an analog of Shannon capacity and show that constrained posterior families can create an irreducible ambiguity floor, which implies a fundamental departure from classical channel coding.

\subsection{Projection onto a Fixed Semantic Region}

Let's consider a target semantic decision region $A \subseteq \mathcal{S}$.
Reducing the ambiguity associated with this region to at most $\epsilon$ is
equivalent to requiring $p(A)\ge 1-\epsilon$. We denote the corresponding
region-specific resolution information as
\begin{equation}
I_{\text{res}}^*(\epsilon; A) = \inf_{p : p(A) \ge 1 - \epsilon} D_{\text{KL}}(p \| p_0).
\label{eq:fixedregion}
\end{equation}
This is the minimum information needed to make the target region sufficiently probable under the posterior.

\begin{theorem}
\label{thm:fixedregion}
Let $A \subseteq \mathcal{S}$ be measurable with $0 < p_0(A) < 1$ and let $q = 1 - \epsilon$ be the target posterior probability on $A$. Then
\begin{equation}
I_{\text{res}}^*(\epsilon; A) = 
\begin{cases}
q \log \dfrac{q}{p_0(A)} + (1-q) \log \dfrac{1-q}{1-p_0(A)}, & q > p_0(A), \\[1em]
0, & q \le p_0(A).
\end{cases}
\label{eq:fixedregion_result}
\end{equation}
\end{theorem}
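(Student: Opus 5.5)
The proof has two parts: a lower bound from the data-processing inequality for KL divergence applied to the coarse-graining $s \mapsto \mathbf{1}\{s\in A\}$, and a matching upper bound from an explicit tilted posterior.

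The case $q \le p_0(A)$ is immediate. Here $p_0(A) \ge 1-\epsilon$, so $p_0$ is feasible and the infimum is $0$; this is the second item of Proposition~\ref{prop:basic} specialized to $\Gamma_A$.

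Now suppose $q > p_0(A)$. Write $d(x\|y) = x\log\frac{x}{y} + (1-x)\log\frac{1-x}{1-y}$ for the binary divergence, and let $p$ be any feasible posterior. Define $x = p(A) \ge q$.

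\textbf{Lower bound.} The data-processing (log-sum) inequality applied to the two-cell partition $\{A, A^c\}$ gives
\begin{equation}
D_{\mathrm{KL}}(p\|p_0) \ge d\bigl(p(A)\,\|\,p_0(A)\bigr).
\end{equation}
To derive this directly, split the integral over $A$ and $A^c$ and apply Jensen's inequality (convexity of $t\log t$) to the conditional densities on each cell. It then remains to show that $x \mapsto d(x\|p_0(A))$ is nondecreasing on $[p_0(A),1]$. Its derivative is
\begin{equation}
\log\frac{x}{p_0(A)} - \log\frac{1-x}{1-p_0(A)},
\end{equation}
which is nonnegative exactly when $x \ge p_0(A)$. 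Since $x \ge q > p_0(A)$, we get $d(x\|p_0(A)) \ge d(q\|p_0(A))$. Taking the infimum over feasible $p$ yields $I_{\mathrm{res}}^*(\epsilon;A) \ge d(q\|p_0(A))$.

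\textbf{Achievability.} Take the piecewise rescaling of the prior,
\begin{equation}
p^*(s) = \frac{q}{p_0(A)}\, p_0(s)\,\mathbf{1}_A(s) + \frac{1-q}{1-p_0(A)}\, p_0(s)\,\mathbf{1}_{A^c}(s).
\end{equation}
This is a valid density because $0 < p_0(A) < 1$. It is absolutely continuous with respect to the reference measure, and $p^*(A) = q$, so it is feasible. The log-ratio $\log(p^*/p_0)$ is constant on $A$ and constant on $A^c$. Integrating it against $p^*$ therefore gives exactly $D_{\mathrm{KL}}(p^*\|p_0) = d(q\|p_0(A))$, so the infimum is attained. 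This is the I-projection, and it equals the lower bound.

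Edge cases are routine. At $q=1$ the formula reads $\log(1/p_0(A))$ under the convention $0\log 0 = 0$, and $p^*$ is just $p_0$ conditioned on $A$.

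The only step needing care is the lower bound. One must make sure the data-processing inequality is stated for general densities, not just finite alphabets. One must also handle $p$ with $D_{\mathrm{KL}}(p\|p_0) = \infty$ or $p \not\ll p_0$, which trivially satisfy the bound. Applying the log-sum inequality cellwise to $A$ and $A^c$ settles both points.
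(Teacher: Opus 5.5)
Your proposal is correct and follows the paper's proof essentially step for step. Both use the data-processing inequality under $s\mapsto\mathbf{1}\{s\in A\}$ for the lower bound, then monotonicity of the binary divergence for $p(A)>p_0(A)$ to reduce to $p(A)=q$, and the same piecewise-rescaled prior $p^*$ for achievability. Your explicit derivative computation, the $q=1$ edge case, and the handling of posteriors with infinite divergence only fill in details the paper leaves implicit.
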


\begin{proof}
Every posterior $p$ induces a binary distribution $(p(A), 1-p(A))$ over $A$ and $A^c$. The initial belief $p_0$ induces $(p_0(A), 1-p_0(A))$. By the data-processing inequality for relative entropy under the mapping $s \mapsto \mathbf{1}\{s \in A\}$,
\begin{equation}
D_{\text{KL}}(p \| p_0) \ge D_{\text{KL}}\bigl((p(A),1-p(A)) \,\|\, (p_0(A),1-p_0(A))\bigr).
\label{eq:dpineq}
\end{equation}
If $q \le p_0(A)$, then $p_0$ already satisfies the constraint, so $I_{\text{res}}^*(\epsilon; A) = 0$.

If $q > p_0(A)$, the right-hand side of \eqref{eq:dpineq} is minimized by setting $p(A) = q$, because the binary divergence $d_{\text{bin}}(u \| r) = u \log(u/r) + (1-u)\log((1-u)/(1-r))$ is increasing in $u$ for $u > r$. This lower bound is achieved by the posterior
\begin{equation}
p^*(s) = 
\begin{cases}
\dfrac{q}{p_0(A)}\, p_0(s), & s \in A, \\[1em]
\dfrac{1-q}{1-p_0(A)}\, p_0(s), & s \in A^c.
\end{cases}
\label{eq:optimal_posterior}
\end{equation}
Since the optimal posterior only reweights the prior between $A$ and $A^c$; within
each region, the shape of $p_0$ is preserved. A direct substitution into the KL divergence yields
\begin{align}
D_{\text{KL}}(p^* \| p_0) &= \int_A p^*(s) \log \frac{p^*(s)}{p_0(s)}\,ds + \int_{A^c} p^*(s) \log \frac{p^*(s)}{p_0(s)}\,ds \\
&= p^*(A) \log \frac{q}{p_0(A)} + p^*(A^c) \log \frac{1-q}{1-p_0(A)} \\
&= q \log \frac{q}{p_0(A)} + (1-q) \log \frac{1-q}{1-p_0(A)},
\end{align}
which is the expression in \eqref{eq:fixedregion_result}.
\end{proof}

Note that Theorem \ref{thm:fixedregion} provides a baseline information-theoretic limit. This is because it shows that when posteriors are unrestricted, only the prior mass $p_0(A)$ matters and therefore the shape of $A$ is irrelevant. Moreover, the optimal update simply moves mass into $A$ with minimal relative-entropy change, rescaling the prior density by constants inside and outside the target region.

\begin{figure}[htbp]
\centering
\includegraphics[width=0.98\columnwidth]{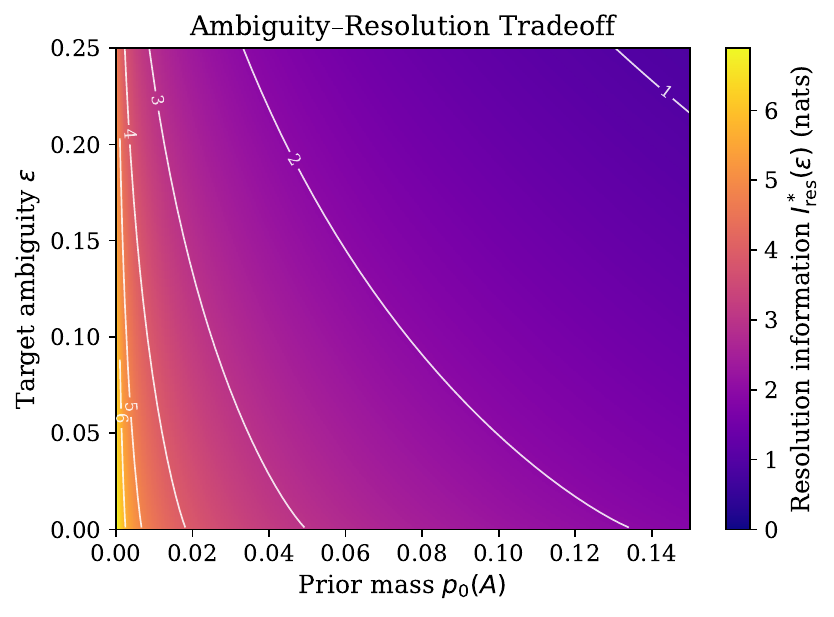}
\caption{Ambiguity--resolution tradeoff. The heatmap shows the resolution information
$I_{\mathrm{res}}^*(\epsilon;A)=d_{\mathrm{bin}}(1-\epsilon\|p_0(A))$
as a function of the prior mass $p_0(A)$ and the target ambiguity $\epsilon$.
White contours indicate level sets of constant $I_{\mathrm{res}}^*$.
The displayed window, $p_0(A)\in[10^{-3},0.15]$ and
$\epsilon\in[10^{-3},0.25]$, focuses on low-prior semantic regions and
stringent ambiguity targets, where the ambiguity--resolution tradeoff is
most pronounced. The resolution information is not bounded in general; it
diverges as $p_0(A)\to 0$ and $\epsilon\to 0$.}\label{fig:heatmap}
\end{figure}

Figure~\ref{fig:heatmap} visualizes the resulting ambiguity--resolution
tradeoff. It shows $I_{\mathrm{res}}^*(\epsilon;A)$ as a function of
the prior mass $p_0(A)$ and the target ambiguity $\epsilon$, with white
contours marking level sets of constant resolution information. The displayed
window focuses on low-prior semantic regions and stringent ambiguity targets,
where the tradeoff is sharpest: decreasing the tolerated ambiguity requires
increasingly larger resolution information, especially when $p_0(A)$ is small.

For instance, when $p_0(A)=0.1$, decreasing the target ambiguity from
$\epsilon=0.1$ to $\epsilon=0.01$ increases the resolution information from
approximately $1.76$ nats to $2.22$ nats. This illustrates that stricter
ambiguity requirements entail a higher resolution-information requirement.
In the unrestricted setting considered here, this tradeoff depends only on
the scalar prior mass $p_0(A)$; the geometry of the semantic region $A$ is
not yet visible. Section~\ref{sec:gaussian} shows that, once the posterior
family is constrained, the geometry of $A$ enters explicitly and may give
rise to irreducible ambiguity floors.

\subsection{Projection over a Decision Partition}

We now consider the full decision problem where the state space is partitioned into semantic regions $\{A_a\}_{a \in \mathcal{A}}$. The ambiguity constraint requires at least one region to have posterior probability at least $1 - \epsilon$:
\begin{equation}
\{p : \Gamma(p) \le \epsilon\} = \bigcup_{a \in \mathcal{A}} \{p : p(A_a) \ge 1 - \epsilon\}.
\label{eq:feasible_union}
\end{equation}
The feasible set is therefore a union of fixed-region feasible sets. The projection onto the union is obtained by choosing the closest of these sets.

\begin{corollary}
\label{cor:partition}
For a finite decision partition $\{A_a\}_{a \in \mathcal{A}}$,
\begin{equation}
I_{\text{res}}^*(\epsilon) = \min_{a \in \mathcal{A}} I_{\text{res}}^*(\epsilon; A_a),
\label{eq:partition_result}
\end{equation}
where each fixed-region term $I_{\text{res}}^*(\epsilon; A_a)$ is given by Theorem \ref{thm:fixedregion} with $p_0(A_a)$.
\end{corollary}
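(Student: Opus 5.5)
The plan is to use the decomposition \eqref{eq:feasible_union} of the feasible set together with the elementary fact that an infimum over a finite union of sets equals the minimum of the infima over the individual sets. Write $\mathcal{F}_\epsilon^{(a)} = \{p : p(A_a) \ge 1-\epsilon\}$. First I will verify \eqref{eq:feasible_union} directly from the definition \eqref{eq:ambiguity}. We have $\Gamma(p)\le\epsilon$ if and only if $\max_a p(A_a) \ge 1-\epsilon$. Because $\mathcal{A}$ is finite, the maximum is attained, so this holds if and only if $p(A_a)\ge 1-\epsilon$ for some $a$, that is, $p \in \bigcup_a \mathcal{F}_\epsilon^{(a)}$.

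Second, I will prove the min-of-infima identity by two inequalities. Each $\mathcal{F}_\epsilon^{(a)}$ is contained in the union, so $I_{\text{res}}^*(\epsilon) \le I_{\text{res}}^*(\epsilon;A_a)$ for every $a$, and hence $I_{\text{res}}^*(\epsilon) \le \min_a I_{\text{res}}^*(\epsilon;A_a)$. Conversely, any feasible $p$ lies in some $\mathcal{F}_\epsilon^{(a)}$. For that $a$,
\[
D_{\text{KL}}(p\|p_0) \ge I_{\text{res}}^*(\epsilon;A_a) \ge \min_{a'} I_{\text{res}}^*(\epsilon;A_{a'}).
\]
Taking the infimum over feasible $p$ gives the reverse inequality. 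Finiteness of $\mathcal{A}$ ensures the minimum exists. The argument also goes through if some infima are $+\infty$, since empty or infinite-valued terms are handled by the convention $\inf\emptyset=+\infty$.

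Third, I will substitute the closed form of Theorem~\ref{thm:fixedregion} for each term, using $p_0(A_a)$ in place of $p_0(A)$. The main obstacle is that Theorem~\ref{thm:fixedregion} assumes $0<p_0(A_a)<1$, so the degenerate regions need separate treatment.

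If $p_0(A_a)=1$ for some $a$, then $\Gamma(p_0)=0$, and both sides vanish by Proposition~\ref{prop:basic}: the left side because $\epsilon \ge \Gamma(p_0)$, the right side because $p_0$ itself is feasible for region $a$.

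If $p_0(A_a)=0$ and $\epsilon<1$, then every $p$ with $p(A_a)\ge 1-\epsilon>0$ fails to be absolutely continuous with respect to $p_0$. Hence $I_{\text{res}}^*(\epsilon;A_a)=+\infty$, which agrees with the formula in the limit $p_0(A_a)\to 0$. Such a term never attains the minimum unless all terms are infinite, and that cannot happen since the $p_0(A_a)$ sum to one.

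With these edge cases settled, \eqref{eq:partition_result} follows.
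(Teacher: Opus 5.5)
Your proposal is correct and follows the same route as the paper: rewrite the feasible set as the finite union \eqref{eq:feasible_union} and use that an infimum over a finite union equals the minimum of the individual infima. Your explicit two-inequality argument, your $+\infty$ convention, and your treatment of the degenerate cases $p_0(A_a)\in\{0,1\}$ are all sound. That last point matters because Theorem~\ref{thm:fixedregion} excludes those cases and the paper's proof passes over them, so your version is slightly more complete.
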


\begin{proof}
From \eqref{eq:feasible_union}, the ambiguity constraint defines a finite
union of region-specific feasible sets. Therefore, the projection
onto this union is obtained by choosing the closest such set:
\begin{align}
I_{\text{res}}^*(\epsilon) &= \inf_{p \in \bigcup_a \mathcal{F}_{a,q}} D_{\text{KL}}(p \| p_0) \\
&= \min_{a} \inf_{p \in \mathcal{F}_{a,q}} D_{\text{KL}}(p \| p_0) \\
&= \min_{a} I_{\text{res}}^*(\epsilon; A_a),
\end{align}
where $q = 1 - \epsilon$.
\end{proof}

This result gives the fundamental lower bound on the information needed to resolve ambiguity. The trade-off is clear: reducing $\epsilon$ (i.e., demanding higher posterior confidence) increases the required information. However, this is the unrestricted projection limit, achieved only when the receiver can choose any posterior belief. In practice, a generative representation restricts the set of achievable posteriors (e.g., to a parametric family), in different ways depending on the generative model. Consequently, the actual required information will be higher than this bound, and moreover some ambiguity targets may be unattainable.

\subsection{Exponential Ambiguity Concentration}

We now study how ambiguity decreases when the receiver obtains multiple independent samples from the generative representation. Unlike the one-shot projection limit, repeated sampling allows ambiguity to decay exponentially, analogous to error exponents in classical channel coding.

\begin{proposition}
\label{prop:sanov}
Suppose the receiver obtains $k$ independent samples from the generative representation, drawn i.i.d. according to the initial belief $p_0$, and forms the empirical posterior $p_k$. By Sanov's theorem \cite[Thm.~11.4.1]{CoverThomas2006}, the probability that the empirical distribution lies outside the set of beliefs satisfying the ambiguity constraint decays as
\begin{equation}
\Pr\bigl(\Gamma(p_k) > \epsilon\bigr) \le \exp\!\bigl(-k \cdot I_{\text{res}}^*(\epsilon) + o(k)\bigr).
\label{eq:sanov_decay}
\end{equation}
Consequently, the number of samples required to achieve $\Gamma(p_k) \le \epsilon$ satisfies
\begin{equation}
k_{\min} \ge \frac{1}{I_{\text{res}}^*(\epsilon)} \log\!\left(\frac{1}{\epsilon}\right) + o(1).
\label{eq:sample_complexity}
\end{equation}
\end{proposition}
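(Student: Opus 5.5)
The plan is to work directly with the event $\{\Gamma(p_k)>\epsilon\}$ as worded, and to reduce it to a finite-alphabet large-deviation statement on the partition. Since $\Gamma$ depends on $p_k$ only through the masses $\bigl(p_k(A_a)\bigr)_{a\in\mathcal{A}}$, I would first push every sample forward through the labelling map $s\mapsto a(s)$, where $a(s)$ is the unique $a$ with $s\in A_a$. This map is well defined almost surely because the regions partition $\mathcal{S}$ up to null sets. The empirical posterior then becomes the empirical type $\hat\pi_k$ of $k$ i.i.d. draws from the finite distribution $\pi_0=(p_0(A_a))_a$. Writing $\Gamma(\pi)=1-\max_a\pi(a)$, we get
\begin{equation*}
\{\Gamma(p_k)>\epsilon\}=\{\hat\pi_k\in E_\epsilon\},\qquad E_\epsilon=\{\pi:\max_a\pi(a)<1-\epsilon\}.
\end{equation*}
Here $E_\epsilon$ is exactly the set of beliefs violating the ambiguity constraint, which is the set named in the statement. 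The finite-alphabet form of Sanov's theorem \cite[Thm.~11.4.1]{CoverThomas2006} then gives
\begin{equation*}
\Pr(\hat\pi_k\in E_\epsilon)\le (k+1)^{|\mathcal{A}|}\exp\bigl(-k\,D^*\bigr),\qquad D^*=\inf_{\pi\in E_\epsilon}D_{\mathrm{KL}}(\pi\|\pi_0).
\end{equation*}
The polynomial prefactor is absorbed into the $o(k)$ term.

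The second step is to evaluate $D^*$ with the tools of Section~\ref{sec:projection}. The set $E_\epsilon$ is an intersection over $a$ of the half-constraints $\pi(a)<q$, with $q=1-\epsilon$. By data processing under $a'\mapsto\mathbf{1}\{a'=a\}$, exactly as in the proof of Theorem~\ref{thm:fixedregion}, each constraint costs at least $d_{\mathrm{bin}}(\pi(a)\|p_0(A_a))$. The same exponential-family reweighting used in \eqref{eq:optimal_posterior}, applied in the opposite direction, should show that the binding constraint is the one for the most likely region $a_0$. This would give $D^*=d_{\mathrm{bin}}(q\|p_0(A_{a_0}))$ when $p_0(A_{a_0})>q$, and $D^*=0$ otherwise.

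The third step is to identify $D^*$ with $I_{\mathrm{res}}^*(\epsilon)$. This is where I expect the main obstacle. When $p_0(A_{a_0})\ge q$, the prior already satisfies the constraint, so Proposition~\ref{prop:basic} gives $I_{\mathrm{res}}^*(\epsilon)=0$. Theorem~\ref{thm:fixedregion} and Corollary~\ref{cor:partition} compute the projection onto the feasible set, whereas $D^*$ is the projection onto its complement. The two agree as boundary values of $d_{\mathrm{bin}}(\cdot\|p_0(A_{a_0}))$ at $u=q$ only in the sense that both are the divergence to the common boundary $\{\max_a\pi(a)=q\}$. My first attempt would therefore be to show that $\inf_{E_\epsilon}$ and $\inf_{\mathcal{F}_\epsilon}$ are both attained on this boundary, and to use continuity of $d_{\mathrm{bin}}$ at $u=q$. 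I would then accept the resulting inequality $D^*\ge I_{\mathrm{res}}^*(\epsilon)$ only in the regime where the boundary-point divergence equals the closed form of Corollary~\ref{cor:partition}. If that comparison fails in general, I would state \eqref{eq:sanov_decay} with exponent $D^*$ and note where it coincides with $I_{\mathrm{res}}^*(\epsilon)$.

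Finally, for \eqref{eq:sample_complexity}, I would require the right-hand side of \eqref{eq:sanov_decay} to fall below the tolerance $\epsilon$, that is $\exp(-kI_{\mathrm{res}}^*(\epsilon)+o(k))\le\epsilon$. Taking logarithms gives $k\,I_{\mathrm{res}}^*(\epsilon)\ge\log(1/\epsilon)+o(k)$. Dividing by $I_{\mathrm{res}}^*(\epsilon)$, assumed positive in the regime of Proposition~\ref{prop:basic}(3), yields the stated bound with the remainder written as $o(1)$ after normalization.
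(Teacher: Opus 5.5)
Your Step 3 obstacle cannot be worked around: the proposition is false as stated, exactly in the regime where it has content. On the simplex over $\mathcal{A}$, the sets $\mathcal{F}_\epsilon=\{\pi:\max_a\pi(a)\ge q\}$ and your $E_\epsilon$ partition the simplex, and $\pi_0$ lies in one of them. So at least one of $I_{\text{res}}^*(\epsilon)=\inf_{\mathcal{F}_\epsilon}D_{\mathrm{KL}}(\pi\|\pi_0)$ and $D^*=\inf_{E_\epsilon}D_{\mathrm{KL}}(\pi\|\pi_0)$ is zero. For $\epsilon<1/2$ your Step 2 makes this explicit: both equal $d_{\mathrm{bin}}(q\|p_0(A_{a_0}))$, but each is nonzero only on its own side of $p_0(A_{a_0})=q$.

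Hence $D^*\ge I_{\text{res}}^*(\epsilon)$ holds only when $I_{\text{res}}^*(\epsilon)=0$, and then \eqref{eq:sanov_decay} is the trivial bound $e^{o(k)}$. When $I_{\text{res}}^*(\epsilon)>0$, i.e.\ $\Gamma(p_0)>\epsilon$, we have $\pi_0\in E_\epsilon$, a relatively open set. The law of large numbers then gives $\Gamma(p_k)\to\Gamma(p_0)>\epsilon$ almost surely, so $\Pr(\Gamma(p_k)>\epsilon)\to 1$. This contradicts $\exp(-kI_{\text{res}}^*(\epsilon)+o(k))\to 0$.

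What Sanov gives with exponent $I_{\text{res}}^*(\epsilon)$ is the complementary event: $\Pr(\Gamma(p_k)\le\epsilon)=\exp(-kI_{\text{res}}^*(\epsilon)+o(k))$. Both halves apply since $\mathcal{F}_\epsilon$ is the closure of its interior. The paper's proof computes exactly this, then asserts that the event $p_k\notin\mathcal{F}_\epsilon$ obeys the same bound. That step is invalid, and it is precisely the confusion between projecting onto $\mathcal{F}_\epsilon$ and onto its complement that you spotted. Your fallback, exponent $D^*$ for the stated event, is the correct theorem; it is nontrivial only when $\Gamma(p_0)<\epsilon$. No argument will recover the version with $I_{\text{res}}^*(\epsilon)$, and you should state that outright rather than leave it conditional.

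Your final step, like the paper's one-line ``solving for $k$'', has a separate gap. Requiring an \emph{upper} bound on the failure probability to be at most $\epsilon$ produces a sufficient number of samples, i.e.\ an upper bound on $k_{\min}$. The claimed $k_{\min}\ge\cdots$ is a converse and would need the lower half of Sanov, which you never invoke. Also, $o(k)/I_{\text{res}}^*(\epsilon)$ is still $o(k)$ and cannot be renamed $o(1)$. Finally, dividing by $I_{\text{res}}^*(\epsilon)>0$ puts you in the regime $\Gamma(p_0)>\epsilon$, where the input bound is false and more samples actually hurt. Already $p_1=\delta_{s_1}$ gives $\Gamma(p_1)=0$ deterministically, while for large $k$ the success probability decays like $e^{-kI_{\text{res}}^*(\epsilon)}$.

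A smaller point: your Step 2 formula for $D^*$ needs $\epsilon<1/2$. For larger $\epsilon$, other regions can become binding after reweighting, and $E_\epsilon$ is empty when $|\mathcal{A}|(1-\epsilon)\le 1$. Your pushforward to the finite alphabet $\mathcal{A}$ is the right way to make Sanov applicable here. It is more careful than the paper, which applies Sanov directly on $\mathcal{S}$ to empirical measures that are not absolutely continuous.
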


\begin{proof}
Sanov's theorem states that for i.i.d. samples drawn from $p_0$, the probability that the empirical distribution falls into a set $\mathcal{F} \subseteq \mathcal{P}(\mathcal{S})$ is $\exp\!\bigl(-k \cdot \min_{p \in \mathcal{F}} D_{\text{KL}}(p \| p_0) + o(k)\bigr)$. Taking $\mathcal{F} = \{p : \Gamma(p) \le \epsilon\}$ as the set of low-ambiguity beliefs, the exponent is exactly $I_{\text{res}}^*(\epsilon)$. The event $\Gamma(p_k) > \epsilon$ is equivalent to the empirical distribution not belonging to $\mathcal{F}$, which gives the bound. Solving for $k$ yields \eqref{eq:sample_complexity}.
\end{proof}

Thus, the resolution information $I_{\text{res}}^*(\epsilon)$ directly determines the exponential decay rate of ambiguity, providing a clean operational meaning: it is the exponent of the fastest possible ambiguity decay when multiple samples are available.

\subsection{Generative Resolvability}

We now introduce \emph{generative resolvability} as the ability of a generative representation to drive ambiguity to zero as more semantic information is accumulated. We remark that this notion plays a role analogous to Shannon capacity, but in our context, ambiguity resolution replaces reliable decoding.

In this case, we simply use the exponential concentration law in Proposition \ref{prop:sanov}, to define generative resolvability as
\begin{equation}
C_{\text{gen}} = \sup \left\{ I_{\text{sample}} : \lim_{k \to \infty} \Gamma(p_k) = 0 \right\}.
\label{eq:genres}
\end{equation}
Thus, $C_{\text{gen}}$ captures the largest semantic information rate per sample under which ambiguity can be asymptotically resolved.

\begin{proposition}
\label{prop:genres}
Assume the exponential ambiguity concentration law \eqref{eq:sanov_decay}. Then:
\begin{enumerate}
\item \textbf{Ideal (unconstrained) case:} If the generative representation can achieve any positive $I_{\text{sample}}$, then $C_{\text{gen}}$ is unbounded; ambiguity can be driven arbitrarily close to zero.
\item \textbf{Semantic mismatch:} If the representation imposes a residual ambiguity floor $\alpha_{\min} > 0$, then
\begin{equation}
C_{\text{gen}} \le \frac{1}{c} \log\!\left(\frac{\Gamma(p_0)}{\alpha_{\min}}\right) < \infty,
\label{eq:genres_bound}
\end{equation}
and ambiguity cannot be resolved beyond this floor regardless of how much information is transmitted.
\end{enumerate}
\end{proposition}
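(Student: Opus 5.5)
The plan is to read the concentration law \eqref{eq:sanov_decay} in a slightly stronger, model-level form. After $k$ samples at per-sample semantic information $I_{\text{sample}}$, the ambiguity is controlled by
\begin{equation}
\Gamma(p_k) \le \Gamma(p_0)\, e^{-c\, k\, I_{\text{sample}}},
\label{eq:plan_decay}
\end{equation}
where $c>0$ is a constant fixed by the representation. It plays the role of the $o(k)$-corrected exponent in Proposition~\ref{prop:sanov}, normalized so that the right-hand side equals $\Gamma(p_0)$ at $k=0$. Both parts of Proposition~\ref{prop:genres} will then be read off from this single inequality, combined with the definition \eqref{eq:genres}.

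\textbf{Part 1 (ideal case).} Fix any achievable $I_{\text{sample}}>0$. The right-hand side of \eqref{eq:plan_decay} tends to zero as $k\to\infty$, so $\lim_k \Gamma(p_k)=0$ and $I_{\text{sample}}$ belongs to the set in \eqref{eq:genres}. In the unconstrained case Theorem~\ref{thm:fixedregion} and Corollary~\ref{cor:partition} show that every target $\epsilon>0$ is reached by some posterior with finite KL cost. The binary divergence is unbounded as $q\to 1$, so there is no a priori cap on the per-sample information that can be realized. If arbitrarily large $I_{\text{sample}}$ are achievable, the supremum in \eqref{eq:genres} is therefore $+\infty$. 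In addition, for every $\epsilon>0$ the sample count $k\ge (cI_{\text{sample}})^{-1}\log(\Gamma(p_0)/\epsilon)$ suffices, consistent with \eqref{eq:sample_complexity}.

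\textbf{Part 2 (floor).} Suppose every posterior generable by the representation satisfies $\Gamma(p)\ge\alpha_{\min}>0$. Then $\lim_k\Gamma(p_k)=0$ is impossible for every $I_{\text{sample}}$, so the resolvability set is empty. This already gives the qualitative claim that ambiguity cannot be driven below the floor, whatever the amount of information sent. For the quantitative bound \eqref{eq:genres_bound}, I would adopt the natural convention that the supremum ranges over rates compatible with the decay law before the floor is hit. Combining \eqref{eq:plan_decay} with $\Gamma(p_k)\ge\alpha_{\min}$ gives $\alpha_{\min}\le \Gamma(p_0)e^{-cI_{\text{sample}}}$ at the first step $k=1$. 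Any rate consistent with both constraints must then satisfy
\[
I_{\text{sample}}\le \tfrac1c\log\bigl(\Gamma(p_0)/\alpha_{\min}\bigr).
\]
Taking the supremum yields \eqref{eq:genres_bound}, and finiteness follows because $\alpha_{\min}>0$ and $\Gamma(p_0)\le 1$.

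\textbf{Main obstacle.} The hard step is making Part 2 rigorous. Read literally, definition \eqref{eq:genres} combined with a floor gives an empty set, so the supremum is $-\infty$ or $0$ by convention. The bound as stated needs the reinterpretation above, namely the largest effective rate before saturation at $\alpha_{\min}$. I would state that convention explicitly, together with the normalized form \eqref{eq:plan_decay} of the decay law, rather than try to extract it from Sanov's asymptotic statement alone.
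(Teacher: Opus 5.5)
Your proposal follows the paper's proof. The paper likewise just posits $\Gamma(p_k)\le\Gamma(p_0)e^{-ckI_{\text{sample}}}$, obtains Part 1 by letting $k\to\infty$, and obtains Part 2 by ``solving for the limiting rate'' against $\alpha_{\min}$; you make that step precise as the $k=1$ inequality together with the saturation convention the paper leaves implicit. One aside in Part 1 is false and should be dropped: $d_{\mathrm{bin}}(q\|p_0(A))$ increases only to $\log(1/p_0(A))<\infty$ as $q\to 1$, so Theorem~\ref{thm:fixedregion} provides no unbounded per-sample information, and Part 1 must rest directly on the hypothesis that every positive $I_{\text{sample}}$ is achievable (which is all your conclusion actually uses).
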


\begin{proof}
In the ideal case, the exponential bound holds for any $I_{\text{sample}} > 0$, so $\Gamma(p_k) \to 0$ as $k \to \infty$.

With a residual floor $\alpha_{\min}$, the inequality $\Gamma(p_k) \le \Gamma(p_0) \exp(-c k I_{\text{sample}})$ cannot drive $\Gamma(p_k)$ below $\alpha_{\min}$. Solving for the limiting rate yields \eqref{eq:genres_bound}.
\end{proof}

Thus, semantic mismatch introduces a fundamental difference from Shannon capacity: a residual ambiguity floor $\alpha_{\min} > 0$ persists even with unlimited semantic information. This is different from classical channels, where error probability can be driven arbitrarily low below capacity; here, ambiguity cannot.

We have established the fundamental limits of ambiguity resolution when posteriors are unrestricted. The next section examines how these limits change when the generative representation constrains the posterior family, and demonstrates how geometry creates irreducible ambiguity floors.


\section{Geometric Limits: Gaussian Posterior Families}
\label{sec:gaussian}

The limit obtained in Theorem~\ref{thm:fixedregion} depends only on the prior masses $p_0(A_a)$ as the geometric shape of the semantic regions becomes invisible. This is because the receiver can choose \emph{any} posterior belief, including the optimal rescaling posterior in \eqref{eq:optimal_posterior}, which reshapes the prior arbitrarily.  However, in practice, a given generative representation restricts the set of achievable posteriors to a parametric family. This restriction clearly brings geometry into the projection and can create irreducible ambiguity floors.

We illustrate this phenomenon using Gaussian posterior families. Unlike in
classical communication theory, where Gaussian models often arise naturally
from noise and maximum-entropy arguments, Gaussianity is not intrinsic to generative
communication in general. However, here it is used as an analytically tractable
posterior family in which the key mechanism can be seen explicitly: once the
set of achievable posteriors is constrained, the geometry of the semantic
region can determine whether ambiguity is resolvable or whether an irreducible
floor appears. All logarithms in this section are natural, so divergences and
resolution information are measured in nats.

\subsection{Gaussian Belief Updates}

We assume here the initial belief is a multivariate Gaussian distribution expressed as follows
\begin{equation}
p_0 = \mathcal{N}(\boldsymbol{\mu}_0, \boldsymbol{\Sigma}_0),
\label{eq:prior_gaussian}
\end{equation}
with mean vector $\boldsymbol{\mu}_0 \in \mathbb{R}^d$ and positive definite covariance matrix $\boldsymbol{\Sigma}_0 \in \mathbb{R}^{d \times d}$. We restrict the posterior belief to the Gaussian family
\begin{equation}
p = \mathcal{N}(\boldsymbol{\mu}, \boldsymbol{\Sigma}),
\label{eq:posterior_gaussian}
\end{equation}
with $\boldsymbol{\mu} \in \mathbb{R}^d$ and $\boldsymbol{\Sigma} \succ 0$. The Kullback--Leibler divergence from $p$ to $p_0$ is
\begin{multline}
D_{\text{KL}}(p \| p_0) = \frac{1}{2} \Big[ \operatorname{tr}(\boldsymbol{\Sigma}_0^{-1}\boldsymbol{\Sigma}) + (\boldsymbol{\mu} - \boldsymbol{\mu}_0)^\top \boldsymbol{\Sigma}_0^{-1} (\boldsymbol{\mu} - \boldsymbol{\mu}_0) \\
- d + \log \frac{\det \boldsymbol{\Sigma}_0}{\det \boldsymbol{\Sigma}} \Big].
\label{eq:kl_gaussian}
\end{multline}

Note that this expression separates two distinct geometric mechanisms for reducing ambiguity: shifting the mean (changing the location of the belief) and shrinking the covariance (increasing its concentration).

\subsection{Half-Space Regions: Resolvable}

Consider a half-space semantic region
\begin{equation}
A = \{ \mathbf{s} \in \mathbb{R}^d : \mathbf{w}^\top \mathbf{s} \le T \},
\label{eq:halfspace}
\end{equation}
where $\mathbf{w} \in \mathbb{R}^d \setminus \{\mathbf{0}\}$ is the normal vector and $T \in \mathbb{R}$ is the threshold parameter. The boundary hyperplane is $\{ \mathbf{s} : \mathbf{w}^\top \mathbf{s} = T \}$. The pair $(\mathbf{w}, T)$ is defined only up to a positive scaling factor; the relevant geometric quantities are the unit normal $\mathbf{w} / \|\mathbf{w}\|$ and the distance $T / \|\mathbf{w}\|$ from the origin to the hyperplane. Under a Gaussian posterior $p = \mathcal{N}(\boldsymbol{\mu}, \boldsymbol{\Sigma})$, the projected variable $\mathbf{w}^\top \mathbf{s}$ is Gaussian with mean $\mathbf{w}^\top \boldsymbol{\mu}$ and variance $\mathbf{w}^\top \boldsymbol{\Sigma} \mathbf{w}$. Therefore,
\begin{equation}
p(A) = \Phi\!\left( \frac{T - \mathbf{w}^\top \boldsymbol{\mu}}{\sqrt{\mathbf{w}^\top \boldsymbol{\Sigma} \mathbf{w}}} \right),
\label{eq:halfspace_prob}
\end{equation}
where $\Phi$ is the standard normal cumulative distribution function. The argument of $\Phi$ is the Mahalanobis distance from the mean $\boldsymbol{\mu}$ to the boundary hyperplane, measured in units of the posterior standard deviation along $\mathbf{w}$.

A half-space is \emph{one-directional}: only the projected coordinate $\mathbf{w}^\top \mathbf{s}$ matters. To isolate the simplest geometric mechanism, first keep the covariance fixed at $\boldsymbol{\Sigma}_0$ and allow only the mean to shift: $p = \mathcal{N}(\boldsymbol{\mu}_0 + \boldsymbol{\Delta}, \boldsymbol{\Sigma}_0)$. Define the initial signed Mahalanobis distance to the boundary hyperplane as
\begin{equation}
\delta_0 = \frac{T - \mathbf{w}^\top \boldsymbol{\mu}_0}{\sqrt{\mathbf{w}^\top \boldsymbol{\Sigma}_0 \mathbf{w}}}.
\label{eq:delta0}
\end{equation}
If $\delta_0 \ge \Phi^{-1}(1-\epsilon)$, the initial belief already satisfies the constraint and no update is needed. Otherwise, the mean must shift toward the region along the normal direction. The minimum KL divergence is achieved by the displacement
\begin{equation}
\boldsymbol{\Delta}^* = -\bigl(\Phi^{-1}(1-\epsilon) - \delta_0\bigr) \frac{\boldsymbol{\Sigma}_0 \mathbf{w}}{\sqrt{\mathbf{w}^\top \boldsymbol{\Sigma}_0 \mathbf{w}}},
\label{eq:delta_star}
\end{equation}
and the resulting resolution information is
\begin{equation}
I_{\text{res}}^*(\epsilon; A) = \frac{1}{2} \left[ \bigl( \Phi^{-1}(1-\epsilon) - \delta_0 \bigr)^+ \right]^2,
\label{eq:halfspace_cost}
\end{equation}
where $(x)^+ = \max\{x,0\}$.

From this derivation we can observe two interesting aspects. First, by shifting the mean arbitrarily far into the half-space, the Mahalanobis distance $\delta_0$ can be made arbitrarily large and therefore $\epsilon$ can be driven arbitrarily close to zero. Second, the same effect can be achieved by reducing the variance along the direction $\mathbf{w}$ (i.e., increasing the Mahalanobis distance by shrinking the denominator) while keeping the mean fixed. Consequently, \emph{half-spaces are resolvable}: under Gaussian posterior families, ambiguity can be made arbitrarily small with sufficient information.

\subsection{Polytope Regions: Irreducible Ambiguity Floors}

Now consider a semantic region that requires simultaneous satisfaction of multiple directional constraints. A canonical example is the \emph{orthant-type polytope}
\begin{equation}
A = \{ \mathbf{s} \in \mathbb{R}^m : s_i \le a,\; i = 1,\ldots,m \},
\label{eq:polytope}
\end{equation}
where $a > 0$ is a common threshold and $s_i$ denotes the $i$-th coordinate of $\mathbf{s}$. This region is the intersection of $m$ half-spaces, one per coordinate. Unlike a single half-space, ambiguity resolution here requires the posterior to concentrate \emph{simultaneously} along all $m$ directions.

For simplicity, let's assume  an isotropic Gaussian posterior as follows
\begin{equation}
p = \mathcal{N}(\mathbf{0}, \sigma^2 \mathbf{I}_m),
\label{eq:isotropic_gaussian}
\end{equation}
and an initial belief $p_0 = \mathcal{N}(\mathbf{0}, \sigma_0^2 \mathbf{I}_m)$ with $\sigma_0 > \sigma$. Under this posterior, the coordinates are independent, and
\begin{equation}
p(A) = \Phi\!\left( \frac{a}{\sigma} \right)^m.
\label{eq:polytope_prob}
\end{equation}

We can notice that the quantity $a / \sigma$ admits a quite natural semantic interpretation: it is the \textbf{semantic margin}, i.e. the distance from the
posterior mean to each coordinate boundary, measured in posterior standard
deviations. A large margin means that the posterior is well concentrated inside
the admissible region; a small margin means that the posterior uncertainty
extends close to the boundary, leading to higher ambiguity.
We can then define the \textbf{maximum achievable semantic margin} as
\begin{equation}
\mu_{\max} \triangleq \frac{a}{\sigma_{\min}}.
\label{eq:mu_max}
\end{equation}
This quantity is the largest number of standard deviations that the constrained posterior can place within the acceptable region $A$ per coordinate, determined by the generative representation's precision limit $\sigma_{\min}$.

The target $p(A) \ge 1-\epsilon$ is therefore equivalent to
\begin{equation}
\Phi\!\left( \frac{a}{\sigma} \right)^m \ge 1-\epsilon,
\end{equation}
which means that
\begin{equation}
\frac{a}{\sigma} \ge \Phi^{-1}\bigl((1-\epsilon)^{1/m}\bigr),
\end{equation}
and therefore
\begin{equation}
\sigma \le \sigma_{\star} \triangleq \frac{a}{\Phi^{-1}\bigl((1-\epsilon)^{1/m}\bigr)}.
\label{eq:polytope_constraint}
\end{equation}

The resolution information is achieved by reducing the variance from $\sigma_0^2$ to the smallest admissible value that satisfies the constraint. If the generative representation imposes a \emph{minimum variance} $\sigma_{\min}^2 > 0$ (e.g., due to limited precision, finite model capacity, or architectural constraints), then the maximum achievable semantic margin is $\mu_{\max} = a / \sigma_{\min}$, and the maximum achievable mass on $A$ is
\begin{equation}
p_{\max}(A) = \Phi\!\left( \mu_{\max} \right)^m.
\label{eq:pmax}
\end{equation}
The corresponding minimum achievable ambiguity is
\begin{equation}
\epsilon_{\min} = 1 - \Phi\!\left( \mu_{\max} \right)^m.
\label{eq:epsilon_floor}
\end{equation}

Hence, if  $\epsilon < \epsilon_{\min}$, the target ambiguity lies below what the
constrained Gaussian family can attain. In this case, no admissible posterior
satisfies the ambiguity constraint, and the corresponding resolution information
is infinite. The quantity $\epsilon_{\min}$ is therefore an irreducible
ambiguity floor: it is the \textbf{residual ambiguity imposed by the finite precision of
the generative representation.}

Figure~\ref{fig:polytope_floor} visualizes this floor as a function
of the dimension $m$ and the maximum semantic margin $\mu_{\max}$. The plot
shows the two competing effects in \eqref{eq:epsilon_floor}: increasing $m$
raises the ambiguity floor because more constraints must be satisfied
simultaneously, whereas increasing $\mu_{\max}$ lowers the floor by allowing
the posterior to concentrate more deeply inside the admissible region. Thus,
bounded precision is especially restrictive for high-dimensional semantic
regions.

\begin{figure}[htbp]
\centering
\includegraphics[width=0.95\columnwidth]{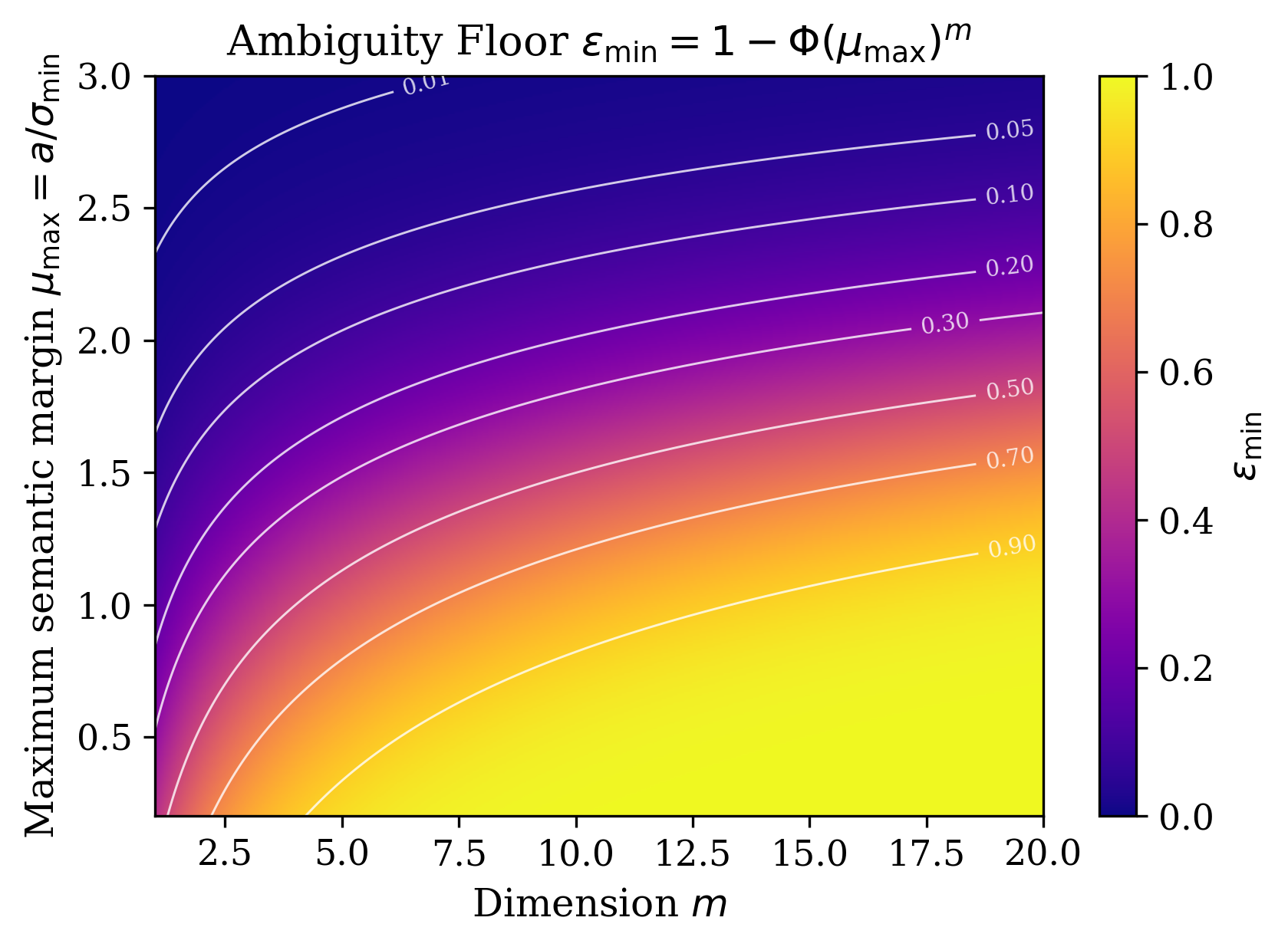}
\caption{Ambiguity floor for the polytope $A=\{s_i\le a\}$ under Gaussian posteriors. The heatmap shows $\epsilon_{\min}=1-\Phi(\mu_{\max})^m$ as a function of the dimension $m$ and maximum semantic margin $\mu_{\max}$. Larger $m$ increases the floor, while larger $\mu_{\max}$ reduces it.}
\label{fig:polytope_floor}
\end{figure}

\begin{figure}[htbp]
\centering
\includegraphics[width=0.95\columnwidth]{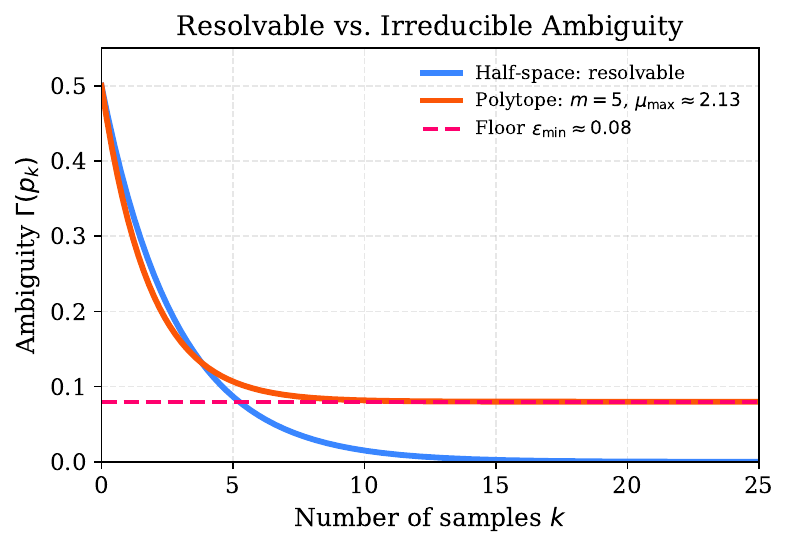}
\caption{Resolvable and non-resolvable geometries under constrained Gaussian posteriors.
The half-space case (blue) is resolvable, with ambiguity decaying to zero.
The polytope case (orange), with $m=5$ and $\mu_{\max}\approx 2.13$,
exhibits an irreducible ambiguity floor
$\epsilon_{\min}=1-\Phi(\mu_{\max})^m\approx 0.08$ (pink dashed).}
\label{fig:halfspace_vs_polytope}
\end{figure}

\subsection{Generative Resolvability for Gaussian Families}
\label{sec:gaussian_resolvability}

The ambiguity floor in \eqref{eq:epsilon_floor} translates directly into a
limit on generative resolvability. For the constrained Gaussian family with
minimum variance $\sigma_{\min}^2>0$, the residual ambiguity is
\begin{equation}
\alpha_{\min}
=
\epsilon_{\min}
=
1-\Phi(\mu_{\max})^m,
\label{eq:gaussian_alpha_min}
\end{equation}
where $\mu_{\max}=a/\sigma_{\min}$ is the maximum achievable semantic margin.
Substituting this floor into Proposition~\ref{prop:genres} gives
\begin{equation}
C_{\mathrm{gen}}
\le
\frac{1}{c}
\log\!\left(
\frac{\Gamma(p_0)}
{1-\Phi(\mu_{\max})^m}
\right)
<\infty,
\label{eq:genres_gaussian}
\end{equation}
whenever the floor is positive.

Thus, finite posterior precision implies finite generative resolvability.
For any finite $\mu_{\max}$, the posterior family cannot concentrate enough
mass inside the polytope to eliminate ambiguity completely. Increasing
$\mu_{\max}$ lowers the floor and increases $C_{\mathrm{gen}}$, while
increasing the dimension $m$ has the opposite effect because more constraints
must be satisfied simultaneously. In the ideal limit
$\sigma_{\min}\to 0$, equivalently $\mu_{\max}\to\infty$, the floor vanishes
and the unconstrained behavior is recovered.

Figure~\ref{fig:halfspace_vs_polytope} illustrates the resulting contrast between a
resolvable half-space and a polytope with an irreducible floor.

\subsection{Effective Dimensions}
\label{sec:geometric_interpretation}

Our results from the Gaussian example suggests a simple geometric principle. Specifically, we obtained that in the unrestricted
setting of Theorem~\ref{thm:fixedregion}, geometry is invisible because the
posterior can be reshaped arbitrarily. On the contrary, once the posterior family is constrained,
the geometry of the semantic region matters through the number of independent
directions in which the posterior must concentrate.

Moreover, a half-space is one-directional: only the projection onto its normal vector
matters. By contrast, the orthant polytope
$A=\{\mathbf{s}:s_i\le a,\; i=1,\ldots,m\}$ is multi-directional: it requires
simultaneous concentration along $m$ coordinate directions. Under the isotropic
Gaussian model considered above, this leads to the floor
\begin{equation}
\epsilon_{\min}
=
1-\Phi(\mu_{\max})^{m}.
\end{equation}

More generally, one may interpret $m$ as an effective dimension
$d_{\mathrm{eff}}(A)$: the number of independent directions along which the
posterior must be sufficiently concentrated.

Note that this interpretation should be understood only relative to the posterior family assumption. In our assumed
 isotropic case, each effective direction has the same maximum semantic
margin $\mu_{\max}$, which yields the simple expression
$1-\Phi(\mu_{\max})^{d_{\mathrm{eff}}(A)}$. For anisotropic or correlated
posterior families, the ambiguity floor depends on the directional margins and
correlations, rather than on $d_{\mathrm{eff}}(A)$ alone. In any case, the insight is
that generative resolvability depends on the alignment between the geometry of
the semantic task and the concentration directions available to the generative
model.

\section{Conclusion}
\label{sec:conclusion}
We introduced resolution information as the minimum belief update needed to
reduce semantic ambiguity in generative communication. In the ideal
unconstrained case, the receiver can form any posterior belief. In this regime,
resolution information reduces to a binary divergence: only the prior mass of
the target semantic region matters, and the geometry of the region is invisible.
As a result, ambiguity can be driven arbitrarily close to zero.

The situation changes when the posterior is constrained by a realistic
generative representation. In that case, the geometry of the semantic region
matters. Some regions can still be resolved, but others create irreducible
ambiguity floors. This is a fundamental difference from classical channel
coding. In Shannon theory, error probability can vanish below capacity because
codes and decoding regions can be designed freely enough. In generative
communication, the model itself restricts the posterior beliefs that the
receiver can form. Therefore, ambiguity may persist even with additional
information. The limitation is not only a limit on rate, but a limit on
resolvability.

Our Gaussian analysis makes this mechanism explicit. For half-spaces, ambiguity
can be resolved because only one direction must be controlled. For
polytope-type regions, several directions must be controlled at the same time.
If the generative representation has a finite precision limit, the posterior
cannot concentrate arbitrarily well along all these directions. The resulting
ambiguity floor is governed by the effective dimension of the semantic region
and by the maximum semantic margin
$\mu_{\max}=a/\sigma_{\min}$. Thus, ideal asymptotic resolvability fails when
the model cannot produce sufficiently precise posteriors in all directions
required by the semantic task.

These results suggest several directions for future work. One direction is to
characterize ambiguity floors for learned generative models, such as VAEs,
diffusion models, and LLMs. Another is to quantify the geometric complexity of
semantic regions beyond the simple half-space and polytope examples studied
here. A third direction is to design generative representations that reduce
posterior uncertainty along the directions that matter most for semantic
resolution. More broadly, the framework suggests that improving generative
communication may require not only better compression or reconstruction, but
also better alignment between the geometry of the semantic task and the
posterior beliefs that the model can represent.



\end{document}